\DeclareMathOperator*{\argmin}{arg\,min}
\newcommand{\abs}[1]{\left\vert#1\right\vert}
\newcommand{\set}[1]{\left\{#1\right\}}
\newcommand{\girth}{\mathrm{girth}}%
\newcommand{\eqdf }{\triangleq}%
\newcommand{\R }{\mathds{R}}
\newcommand{\E }{\mathds{E}}
\newcommand{\N }{\mathds{N}}
\newcommand{\calN }{\mathcal{N}} 
\newcommand{\calC }{\mathcal{C}} 
\newcommand{\calV }{\mathcal{V}} 
\newcommand{\calJ }{\mathcal{J}} 
\newcommand{\calB }{\mathcal{B}}
\newcommand{\calP }{\mathcal{P}}
\renewcommand{\Pr }{\mathrm{Pr}}
\newcommand{\cost }{\mathrm{cost}}
\newcommand{\LP}{\textsc{lp}}
\newcommand{\ML}{\textsc{ml}}
\newcommand{\NWMS}{\textsc{nwms}}
\newcommand{\calCj }{\mathcal{C}^j}
\newcommand{\calCbar }{\overline{\mathcal{C}}}
\newcommand{\calCbarJ}{\overline{\mathcal{C}}^{\mathcal{J}}}
\newcommand{\calCJ }{\mathcal{C}^\mathcal{J}}
\newenvironment{remark}[0]{\noindent \textbf{   Remark:}~}{\endproof}
\newtheorem{definition}{Definition}
\newtheorem{lemma}[definition]{Lemma}
\newtheorem{proposition}[definition]{Proposition}
\newtheorem{theorem}[definition]{Theorem}
\newtheorem{corollary}[definition]{Corollary}
\begin{document}

\title{Local-Optimality Guarantees for Optimal Decoding Based on Paths}

\author{
      Guy Even \thanks{School of Electrical Engineering, Tel-Aviv University, Tel-Aviv 69978, Israel.  \mbox{{E-mail}:\ {\tt guy@eng.tau.ac.il}.}}
      \and
      Nissim Halabi \thanks{School of Electrical Engineering, Tel-Aviv University, Tel-Aviv 69978, Israel. \mbox{{E-mail}:\ {\tt nissimh@eng.tau.ac.il}.}}}

\date{}

 \maketitle

\begin{abstract}
This paper presents a unified analysis framework that captures
  recent advances in the study of local-optimality characterizations
  for codes on graphs. These local-optimality characterizations are
  based on combinatorial structures embedded in the Tanner graph of
  the code. Local-optimality implies both unique maximum-likelihood (ML)
  optimality and unique linear-programming (LP) decoding optimality. Also, an
  iterative message-passing decoding algorithm is guaranteed to find
  the unique locally-optimal codeword, if one exists.

  We demonstrate this proof technique by considering a definition of
  local-optimality that is based on the simplest combinatorial
  structures in Tanner graphs, namely, paths of length $h$.  We apply
  the technique of local-optimality to a family of Tanner codes.  Inverse
  polynomial bounds in the code length are proved on the word error
  probability of LP-decoding for this family of Tanner codes.
\end{abstract}

\section{Introduction} \label{sec:intro}

The method of decoding error correcting codes based on a linear programming (LP) relaxation of maximum-likelihood (ML) decoding was introduced by Feldman, Wainwright and Karger~\cite{FK02,FWK05}. Successful decoding by LP-decoding was analyzed for different codes and channels (see e.g., \cite{FMSSW07,FS05,DDKW08,Ska11}). Significant  advances in the analysis of successful LP-decoding have been achieved recently.  Following Koetter and Vontobel~\cite{KV06}, Arora \emph{et al.}~\cite{ADS09} obtained improved bounds
for $(3,6)$-regular low-density parity-check (LDPC) codes over the binary symmetric channel (BSC). These techniques were
extended to memoryless binary-input output-symmetric (MBIOS) channels~\cite{HE11} and to Tanner codes~\cite{EH11}. The proofs in these papers are based on complicated
graphical structures and on a sophisticated analysis of a random min-sum
process. Our goal in this paper is to present a simple analysis based
on this proof technique that proves nontrivial bounds for a broad
family of Tanner codes.

\noindent
The proof technique in~\cite{ADS09} is based on the following steps:
\begin{enumerate}
\item Define a set of deviations. A deviation is induced by
  combinatorial structures in the Tanner graph or the computation
  tree~\cite{Wib96, KV06,ADS09,Von10,EH11}.
\item Define local-optimality. Local-optimality is a combinatorial characterization of a codeword $x\in\{0,1\}^N$ with respect to a channel output $\lambda\in\R^N$. This definition is based on: (1)~a
  definition of a relative point for a codeword $x$ and each
  deviation $\beta$~\cite{Fel03}, and (2)~a definition of the cost of each relative point with
  respect to the log-likelihood ratios (LLR) vector $\lambda$.  Loosely speaking, a codeword
  $x$ is locally-optimal if its cost is smaller than the cost of every
  relative point.
\item Prove that if $x$ is locally-optimal codeword w.r.t. an LLR vector $\lambda$, then $x$ is the unique
  maximum-likelihood (ML) codeword.  The proof is based on a
  decomposition lemma that states that every codeword is a conical sum
  of deviations.
\item Prove that if $x$ is locally-optimal codeword w.r.t. an LLR vector $\lambda$, then $x$ is the unique
  linear-programming (LP) codeword. This proof is based on a lifting
  lemma~\cite{HE11} that states that local-optimality is invariant under liftings
  of codewords to covering graphs.
\item Analyze the probability that there does not exist a locally-optimal codeword.
\end{enumerate}

We demonstrate this proof technique by considering the simplest
combinatorial structures in Tanner graphs, namely, paths of length
$h$.  We attach to each path $p$ a deviation $\beta\in \R^N$ that
equals the multiplicity of each variable node along $p$ divided by its
degree times $h+1$. Surprisingly, all the ingredients of the proof
technique can be demonstrated with respect to this primitive set of
deviations.

The family of codes we consider is the set of Tanner codes that
satisfy two properties: (1)~the local codes contain only codewords of
even weight, and (2)~all the variable node degrees are even. We refer
to this family of Tanner codes as \emph{even Tanner
  codes}\footnote{The parity of the node degrees is used to prove that
  the subgraph induced by codewords is Eulerian, a key component in the decomposition lemma of step $3$.}.  Among the codes in
this family are:
\begin{inparaenum}[(1)]
\item LDPC codes with even left degrees,
\item irregular repeat accumulate codes where the repetition factors are even, and
\item expander codes with even variable node degrees and even weighted local codes.
\end{inparaenum}

Apart from offering a simple application of a powerful proof
technique, we shed light on two issues.  (1)~The deviations should
\emph{not} be limited by the girth of the Tanner graph.  Indeed, we
consider paths of arbitrary length $h$ which need not be simple for
steps 1-4 of the proof technique.  One should note that step 5 that
bounds the probability of the existence of a locally-optimal codeword
requires independence of the LLRs in the deviation. Hence, the bound
here holds only for simple paths, thus limiting $h$ by the girth.
(2)~The Tanner graph need not be regular. Irregular degrees are handled
by dividing the deviations by the node degrees (see~\cite{Von10}).

Local-optimality is also related to iterative decoding.  An iterative
message-passing decoding algorithm is presented in~\cite{EH11} with
the guarantee that, if a locally-optimal codeword exists, then the
decoding algorithm finds it.  Moreover, since local-optimality can be
verified efficiently, an ML-certificate is obtained whenever there
exists a locally-optimal codeword.

We prove inverse polynomial bounds in the code length on the word
error probability for LP-decoding of even Tanner codes whose Tanner
graphs have logarithmic girth.  For certain sub-classes of even Tanner
codes this technique provides new bounds on the word error probability
of LP-decoding.  In the case of repeat accumulate codes, we provide a
simple proof of previous results~\cite{FK02,GB11}.

\paragraph{Organization.} The remainder of the paper is organized as follows. Section~\ref{sec:prelim} provides background on ML-decoding and LP-decoding of Tanner codes over MBIOS channels. Section~\ref{sec:LOCert} contains steps 1-4 in the proof technique of local-optimality applied to even Tanner codes. In Section~\ref{sec:bounds} we demonstrate step 5 by a simple probabilistic analysis of the event where no locally-optimal codeword exists. We conclude with a discussion in Section~\ref{sec:discussion}.

\section{Preliminaries} \label{sec:prelim}

\paragraph{Graph Terminology.}
Let $G=(V,E)$ denote an undirected graph.  Let $\calN_G(v)$ denote the set
of neighbors of node $v \in V$. Let $\deg_G(v)\triangleq\lvert\mathcal{N}_G(v)\rvert$ denote the edge degree of node $v$ in graph $G$.
A path
$p=(v,\ldots,u)$ in $G$ is a sequence of vertices such that there
exists an edge between every two consecutive nodes in the sequence
$p$. A \emph{simple} path is a path with no repeated vertex. A \emph{simple cycle} is a closed path where the only repeated vertex is the first and last vertex. A path $p$ is \emph{backtrackless} if every three consecutive vertices along $p$ are distinct (i.e., a subpath $(u,v,u)$ is not allowed).
Let $\lvert p\rvert$ denote the length of a path $p$, i.e., the
number of edges in $p$.
Let $\girth(G)$ denote the length of the shortest cycle in $G$.
The \emph{subgraph of $G$ induced by $S \subseteq V$} consists of $S$ and all edges in $E$, both endpoints of which
are contained in $S$. Let $G_S$ denote the subgraph of $G$ induced by $S$.

\paragraph{Tanner-codes and Tanner graph representation.}
Let $G=(\mathcal{V} \cup \mathcal{J}, E)$ denote an edge-labeled
bipartite-graph, where $\calV = \{v_1,\ldots,v_N\}$ is a set of
$N$ vertices called \emph{variable nodes}, and $\mathcal{J} =
\{C_1,\ldots,C_J\}$ is a set of $J$ vertices called \emph{local-code
  nodes}.
We associate with each local-code node $C_j$ a linear code $\calCbar^j$ of length $\deg_G(C_j)$. Let $\calCbarJ \triangleq \big\{\calCbar^j\ :\ 1\leqslant j \leqslant J \big\}$ denote the set of \emph{local codes}, one for each local-code node. We say that $v_i$ \emph{participates} in $\calCbar^j$ if $(v_i,C_j)$ is an edge in $E$.

A word $x = (x_1,\ldots,x_N) \in \{0,1\}^N$ is an assignment
to variable nodes in $\calV$ where $x_i$ is assigned to $v_i$. Let $\mathcal{V}_j$ denote the set $\calN_G(C_j)$ ordered according to labels of
edges incident to $C_j$. Denote by $x_{\mathcal{V}_j} \in
\{0,1\}^{\deg_G(C_j)}$ the projection of the word $x = (x_1,\ldots,x_N)$ onto
entries associated with $\mathcal{V}_j$.

The \emph{Tanner code} $\calC(G,\calCbarJ)$ based on the labeled
\emph{Tanner graph} $G$ is the set of vectors $x \in \{0,1\}^N$
such that $x_{\mathcal{V}_j}$ is a codeword in $\calCbar^j$ for every
$j \in \{1,\ldots,J\}$.
Let $\calCj$ denote the \emph{extension} of the local code $\calCbar^j$ from length $\deg(C_j)$ to length $N$ defined by $\calCj \triangleq \{x \in \{0,1\}^N\mid x_{\calV_j} \in \calCbar^j \}$. The Tanner code is simply the intersection of the
extensions of the local codes, i.e., $\calC(G,\calCbarJ) = \bigcap_{j \in \{1,\ldots,J\}}{\calCj}$.

\medskip\noindent
We consider a family of Tanner codes defined as follows.
\begin{definition}[even Tanner codes] \label{def:evenTannerCodes}
A Tanner code $\calC(G,\calCbarJ)$ based on a Tanner graph $G=(\calV\cup\calJ,E)$ is called an \emph{even Tanner code} if:
\begin{inparaenum}[(1)]
\item  $\deg_G(v)$ is even for every $v\in\calV$, and
\item every codeword in each local code $\calCbar^j\in\calCbarJ$ has  even weight.
\end{inparaenum}
\end{definition}

\paragraph{LP decoding of Tanner codes over memoryless channels.}
Let $c_i \in \{0,1\}$ denote the $i$th transmitted binary symbol (channel input), and let $y_i\in\R$ denote the $i$th received symbol (channel output).
A \emph{memoryless binary-input output-symmetric} (MBIOS) channel is defined by a conditional probability density function $f(y_i|c_i=a)$ for $a\in\{0,1\}$, that satisfies $f(y_i|0) = f(-y_i|1)$. The binary erasure channel (BEC), binary symmetric channel (BSC) and binary-input additive white Gaussian noise (BI-AWGN) channel are examples for MBIOS channels. In MBIOS channels, the \emph{log-likelihood ratio} (LLR) vector $\lambda \in \R^N$ is defined by $\lambda_i (y_i) \triangleq
\ln\big(\frac{f(y_i|c_i=0)}{f(y_i|c_i=1)}\big)$ for every
input bit $i$. For a code $\calC$, \emph{Maximum-Likelihood (ML)
decoding} is equivalent to
\begin{equation} \label{eqn:MLdecoding}
 \hat{x}^{\ML}(y) = \argmin_{x \in \mathrm{conv}(\mathcal{C})} \langle
\lambda(y) , x \rangle,
\end{equation}
where $\mathrm{conv}(\mathcal{C})\subset[0,1]^N$ denotes the convex hull of the codewords in $\mathcal{C}$.

In general, solving the optimization problem in (\ref{eqn:MLdecoding})
for linear codes is intractable~\cite{BMT78}. Feldman \emph{et
  al.}~\cite{Fel03,FWK05} introduced a linear programming relaxation
for the problem of ML decoding of Tanner codes with single parity-check codes acting as local codes. We consider an extension of this definition to the case in which
the local codes are arbitrary as follows. The \emph{generalized
  fundamental polytope} $\calP \triangleq \calP(G,\calCbarJ)$ of a
Tanner code $\calC = \calC(G,\calCbarJ)$ is defined by
\begin{equation}
\calP \triangleq \bigcap_{\calCj \in \calCJ}{\mathrm{conv}(\calCj)}.
\end{equation}

Given an LLR vector $\lambda$ for a received word $y$, LP-decoding is
defined by the following linear program:
\begin{equation} \label{eqn:LPdecoding} \hat{x}^{\LP}(y) \triangleq
  \argmin_{x \in \mathcal{P}(G,\calCbarJ)} \langle \lambda(y) , x
  \rangle.
\end{equation}
The difference between ML-decoding and LP-decoding is that the
fundamental polytope $\calP(G,\calCbarJ)$ may strictly contain the
convex hull of $\calC$. Vertices of $\calP(G,\calCbarJ)$ that are not
codewords of $\calC$ must have fractional components and are called
\emph{pseudocodewords}.

\section{A Local Combinatorial Certificate for an Optimal Codeword} \label{sec:LOCert}

In this section we define a simple type of local-optimality characterization that is based on backtrackless paths of arbitrary length $h$ in the Tanner graph. We prove that for codewords of even Tanner codes, this characterization suffices both for ML-optimality and LP-optimality.

\begin{definition}[normalized characteristic vector]
Consider a Tanner graph $G=(\calV\cup\calJ,E)$. The \emph{normalized characteristic vector} $\chi_G(p)\in\R^{\abs{\calV}}$ of a path $p$ is defined as follows. For every $v\in\calV$ the component $[\chi_G(p)]_v$ equals to the multiplicity of the variable node $v$ in the path $p$ divided by its degree in $G$. Formally,
\begin{equation}
[\chi_G(p)]_v = \frac{1}{\deg_G(v)}\cdot\big\lvert\set{v\mid v\in p}\big\rvert.
\end{equation}
If $p$ is closed (i.e., a cycle), then we count the multiplicity of the endpoints only once.
\end{definition}
\noindent The normalization by $\deg_G(v)$ is needed if $G$ has irregular variable node degrees.

For any fixed $h$, let $\calB^{(h)}$ denote the set of normalized characteristic vectors of backtrackless paths of length $h$ in $G$ scaled by a factor $\frac{1}{h+1}$. That is,
\begin{equation}\label{eqn:deviations}
\calB^{(h)}\triangleq\left\{\frac{\chi_G(p)}{h+1}\ \bigg\vert \ p\ \mathrm{is\ a\ backtrackless\ path\ of\ length\ } h\right\}.
\end{equation}
Vectors in $\calB^{(h)}$ are called \emph{deviations}.
Note that $\calB^{(h)}\in[0,1]^{\abs{\calV}}$ because every variable node appears less than $h+1$ times in a path of length $h$.

For two vectors $x \in \{0,1\}^N$ and $f \in [0,1]^N$, let $x\oplus f \in [0,1]^N$ denote the \emph{relative point} defined by $(x\oplus f)_i \triangleq |x_i-f_i|$~\cite{Fel03}. The following definition characterizes local-optimality based on backtrackless paths for even Tanner codes over MBIOS channels.
\begin{definition}[path-based local-optimality] \label{def:localOptimality}
Let $\mathcal{C}(G) \subset \{0,1\}^N$ denote an even Tanner code and let $h\in\N_+$. A codeword $x \in \calC(G)$ is \emph{$h$-locally optimal with respect to $\lambda \in \R^N$} if for all vectors $\beta \in \calB^{(h)}$,
\begin{equation}
\langle \lambda,x \oplus \beta \rangle > \langle \lambda, x \rangle.
\end{equation}
\end{definition}

For two vectors $y,z\in \R^N$, let ``$\ast$'' denote coordinate-wise multiplication, i.e., $(y\ast z)_i \triangleq y_i\cdot z_i$. For a word $x\in\{0,1\}^N$, let $(-1)^x\in\{\pm1\}^N$ denote the vector whose $i$th component equals $(-1)^{x_i}$.
The following proposition and corollary state that the mapping
$(x,\lambda)\mapsto(0^N,(-1)^x\ast\lambda)$ preserves local-optimality.
\begin{proposition}{\cite{EH11}}\label{proposition:isoLO}
For every $\lambda\in\R^N$ and every $\beta\in[0,1]^N$,
\begin{equation}
\langle (-1)^x\ast\lambda,\beta\rangle  = \langle  \lambda,x\oplus\beta\rangle-\langle\lambda,x\rangle.
\end{equation}
\end{proposition}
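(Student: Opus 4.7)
The plan is to prove the identity coordinate-by-coordinate. Since both sides are linear in $\lambda$ and decompose as sums over the $N$ coordinates, it suffices to verify that for each index $i$,
\[
(-1)^{x_i}\,\lambda_i\,\beta_i \;=\; \lambda_i \cdot (x\oplus\beta)_i - \lambda_i \cdot x_i,
\]
which reduces to showing $(-1)^{x_i}\beta_i = |x_i - \beta_i| - x_i$ (after factoring out $\lambda_i$, on indices where $\lambda_i\neq 0$; the identity is trivial where $\lambda_i = 0$).

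I would then split into the two cases based on $x_i \in \{0,1\}$. When $x_i = 0$, we have $(x\oplus\beta)_i = |0-\beta_i| = \beta_i$ because $\beta_i \in [0,1]$, so the right-hand side is $\beta_i - 0 = \beta_i$, matching $(-1)^0 \beta_i = \beta_i$. When $x_i = 1$, we use $\beta_i \in [0,1]$ to conclude $(x\oplus\beta)_i = |1-\beta_i| = 1-\beta_i$, so the right-hand side is $(1-\beta_i) - 1 = -\beta_i$, matching $(-1)^1 \beta_i = -\beta_i$. Summing over $i$ yields the claimed inner product identity.

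There is no real obstacle here; the statement is essentially the observation that the map $\beta \mapsto x\oplus\beta$ is the affine involution on the Boolean cube expressed in $\{0,1\}^N$, and translating to the $\pm 1$ picture via $(-1)^x$ turns the XOR-like operation into componentwise sign flipping. The only thing one must be careful about is to use the hypothesis $\beta \in [0,1]^N$ when removing the absolute value in the case $x_i = 1$; without this hypothesis the identity would fail. Since the identity is stated in \cite{EH11}, the proof in the paper is presumably just this one-line case analysis.
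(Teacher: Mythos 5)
Your proof is correct, and it is essentially the only proof: the paper itself gives no argument for Proposition~\ref{proposition:isoLO} (it defers to~\cite{EH11}), but the identical coordinate-wise expansion $\langle\lambda,x\oplus\beta\rangle = \langle\lambda,x\rangle + \sum_{i=1}^{N}(-1)^{x_i}\lambda_i\beta_i$ appears verbatim in the proof of Theorem~\ref{thm:MLsufficient}, so your two-case analysis on $x_i\in\{0,1\}$ using $\beta_i\in[0,1]$ to resolve $|x_i-\beta_i|$ is exactly the intended argument. (One small simplification: there is no need to treat $\lambda_i=0$ separately, since you can verify $(-1)^{x_i}\beta_i=|x_i-\beta_i|-x_i$ first and then multiply through by $\lambda_i$.)
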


\begin{corollary}[symmetry of local-optimality]\label{corr:LOsymmetry}
For every $x\in\calC$, $x$ is $h$-locally optimal w.r.t. $\lambda$ if and only if $0^N$ is $h$-locally optimal w.r.t. $(-1)^x\ast\lambda$.
\end{corollary}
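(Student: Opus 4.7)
The plan is to derive the corollary as an almost immediate consequence of Proposition~\ref{proposition:isoLO}. The key observation is that for any $\beta \in [0,1]^N$, we have $0^N \oplus \beta = \beta$ (since $|0 - \beta_i| = \beta_i$), and $\langle \lambda', 0^N \rangle = 0$ for any vector $\lambda'$. So the defining inequality of $h$-local optimality for the zero codeword with respect to $\lambda' \eqdf (-1)^x \ast \lambda$ collapses to
\[
\langle (-1)^x \ast \lambda, \beta \rangle > 0 \quad \text{for every } \beta \in \calB^{(h)}.
\]

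Next I would apply Proposition~\ref{proposition:isoLO} to rewrite the left-hand side: it is exactly $\langle \lambda, x \oplus \beta \rangle - \langle \lambda, x \rangle$. Hence the above condition is equivalent to
\[
\langle \lambda, x \oplus \beta \rangle > \langle \lambda, x \rangle \quad \text{for every } \beta \in \calB^{(h)},
\]
which is precisely the defining inequality of $h$-local optimality of $x$ with respect to $\lambda$ in Definition~\ref{def:localOptimality}. Both directions of the equivalence follow simultaneously from this chain of iff's.

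One minor bookkeeping point to address is that Definition~\ref{def:localOptimality} requires the codeword in question to lie in $\calC(G)$. Since $\calC(G)$ is a linear code we have $0^N \in \calC(G)$, so applying the definition to $0^N$ is legitimate. No other obstacles arise; the corollary is essentially a reformulation of Proposition~\ref{proposition:isoLO} specialized to the case where one of the two codewords is zero, and it is used later to reduce the analysis of arbitrary transmitted codewords to the all-zero case via the MBIOS channel symmetry.
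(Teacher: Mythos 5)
Your proposal is correct and follows exactly the same route as the paper: both reduce the claim to Proposition~\ref{proposition:isoLO}, using the identity $\langle \lambda, x\oplus\beta\rangle - \langle\lambda,x\rangle = \langle(-1)^x\ast\lambda,\beta\rangle$ to see that the two local-optimality conditions are term-by-term equivalent. You merely spell out the bookkeeping ($0^N\oplus\beta=\beta$, $\langle\lambda',0^N\rangle=0$, $0^N\in\calC$) that the paper leaves implicit.
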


\begin{proof}
By~Proposition~\ref{proposition:isoLO},
$\langle \lambda,x \oplus \beta \rangle - \langle \lambda, x \rangle  = \langle (-1)^x\ast\lambda,\beta\rangle$.
\end{proof}

Corollary~\ref{corr:LOsymmetry} suggests that a codeword $x$ can be
verified to be $h$-locally optimal w.r.t. a given LLR $\lambda$ by
verifying that each backtrackless path of length $h$ has positive
normalized cost w.r.t. $(-1)^x\ast\lambda$.  That is, if the minimum
normalized cost of every path with length $h$ w.r.t. $(-1)^x\ast\lambda$
is positive, then $x$ is $h$-locally optimal w.r.t. $\lambda$. A
min-cost path of length $h$ in a graph can be computed by a simple
dynamic programming algorithm (Floyd's algorithm) in time
$O(h\cdot\abs{E})$. Hence, a codeword can be efficiently verified to
be locally-optimal w.r.t. $\lambda$.

\subsection{Local-Optimality Implies ML-Optimality} \label{subsec:LOimpliesML}
In the following section we show that local-optimality is sufficient for ML-optimality (Theorem~\ref{thm:MLsufficient}). The proof of Theorem~\ref{thm:MLsufficient} is based on the representation of every codeword as a conical combination of deviations in $\calB^{(h)}$ (Corollary~\ref{cor:path_decomposition}). We first prove that every codeword in an even Tanner code is a conical combination of normalized characteristic vectors of simple cycles in the Tanner graph (Lemma~\ref{lemma:cycleDecomposition}). Then we show that every cycle of length $\ell$ is a conical combination of $\ell$ deviations in $\calB^{(h)}$ for any arbitrary $h$, which implies Corollary~\ref{cor:path_decomposition}.

\begin{lemma} [simple cycles decomposition]
  \label{lemma:cycleDecomposition} Let $\calC(G)$ denote an even
  Tanner code, and let $\Gamma$ denote the set of simple cycles in
  $G$. For every codeword $x \neq 0^N$, there exists a distribution
  $\rho$ over the set $\Gamma$ and an $\alpha>1$, such that
\begin{equation}
x = \alpha \cdot \E_{\gamma\in_\rho\Gamma}\big[\chi_G(\gamma)\big].
\end{equation}
\end{lemma}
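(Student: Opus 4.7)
The plan is to attach to a nonzero codeword $x$ an auxiliary subgraph $H$ of $G$ in which every vertex has even degree, invoke the classical fact that such a graph decomposes into edge-disjoint simple cycles, and read off the desired distribution from this decomposition.

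Concretely, I would let $H$ be the subgraph of $G$ induced on $\mathrm{supp}(x)\cup\calJ$, so that $H$ contains exactly those edges of $G$ incident to a variable node in the support of $x$. The two defining properties of an even Tanner code each contribute half of the Eulerian condition on $H$: a variable node $v\in\mathrm{supp}(x)$ keeps all its incident edges, so $\deg_H(v)=\deg_G(v)$ is even by property (1); and a check node $C_j$ satisfies $\deg_H(C_j)=\mathrm{wt}(x_{\calV_j})$, which is even by property (2) since $x_{\calV_j}\in\calCbar^j$. Because $x\neq 0^N$, the subgraph $H$ has at least one edge. Applying the standard iterative cycle-extraction argument (which works precisely because all degrees are even) partitions $E(H)$ into edge-disjoint simple cycles $\gamma_1,\ldots,\gamma_k$ in $G$, with $k\geq 1$, and being edge-disjoint these $\gamma_i$ are pairwise distinct as elements of $\Gamma$.

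It then remains to do a local count at each variable node. If $x_v=0$ then $v$ is isolated in $H$ and lies on no $\gamma_i$, giving $\sum_i [\chi_G(\gamma_i)]_v=0$. If $x_v=1$ then the $\deg_G(v)$ edges of $H$ at $v$ are partitioned among the $\gamma_i$'s into pairs, one entering and one leaving at each visit, so exactly $\deg_G(v)/2$ of the cycles pass through $v$ and $\sum_i [\chi_G(\gamma_i)]_v=\tfrac{1}{\deg_G(v)}\cdot\tfrac{\deg_G(v)}{2}=\tfrac{1}{2}$. Consequently $\sum_{i=1}^{k}\chi_G(\gamma_i)=\tfrac12 x$, so letting $\rho$ be the uniform distribution on $\{\gamma_1,\ldots,\gamma_k\}\subseteq\Gamma$ yields $x=2k\cdot\E_{\gamma\in_\rho\Gamma}[\chi_G(\gamma)]$ with $\alpha=2k\geq 2>1$. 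The main step to get right is the Eulerian setup of $H$: this is exactly where both defining hypotheses of an even Tanner code are used, one to control variable-node degrees and one to control check-node degrees. Once $H$ is Eulerian, the cycle decomposition and the per-node counting are routine.
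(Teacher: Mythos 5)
Your proof is correct and follows essentially the same route as the paper: use the evenness of variable-node degrees and of local-codeword weights to see that the support-induced subgraph has all even degrees, decompose it into edge-disjoint simple cycles, and observe that each variable node in the support contributes $1/2$ to the sum of normalized characteristic vectors, giving $\alpha=2k$. The only cosmetic difference is that the paper passes through Eulerian circuits of the connected components before splitting them into simple cycles, whereas you invoke the cycle decomposition of an even graph directly; the counting and conclusion are the same.
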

\begin{proof}
Let $\calV_x\triangleq\{v\mid x_v=1\}$, and let $G_x$ denote the subgraph of the Tanner graph $G$ induced by $\calV_x\cup\calN_G(\calV_x)$. Because $x$ is a codeword in an even Tanner code, the degree of every node (both variable nodes and local-code nodes) in $G_x$ is even. Therefore, each connected component in $G_x$ is Eulerian. Denote by $\{G_x^{(j)}\}$ the set of connected components of $G_x$,
and let $\psi^{(j)}$ denote an Eulerian cycle in $G_x^{(j)}$.

Consider a variable node $v$ in the connected component $G_x^{(j)}$,
then the multiplicity of $v$ in $\psi^{(j)}$ equals
$\frac{\deg_G(v)}{2}$. Therefore, $2\cdot\sum_j
[\chi_G(\psi^{(j)})]_v=1$ (and, by definition, $x_v=1$).

Every Eulerian cycle $\psi^{(j)}$ can be decomposed into a set of
edge disjoint simple cycles.  Let $\Gamma^{(j)}$ denote the decomposition of
$\psi^{(j)}$ into simple cycles. Then,
$\chi_G(\psi^{(j)})=\sum_{\gamma\in\Gamma^{(j)}}\chi_G(\gamma)$. Thus,
\begin{equation*}
x_v = 2\cdot\sum_j\sum_{\gamma\in\Gamma^{(j)}}\chi_G(\gamma).
\end{equation*}
Let $\rho$ denote the uniform distribution over $\cup_j\Gamma^{(j)}$
and let $s\triangleq\abs{\cup_j\Gamma^{(j)}}$.  Then,
\[
x = 2 s\cdot\E_{\gamma\in_{\rho} \Gamma} \big[\chi_G(\gamma)\big].
\]
\end{proof}

\begin{corollary}\label{cor:path_decomposition}
  Let $\calC(G)$ denote an even Tanner code, and let $h\in\N_+$. For
  every codeword $x \neq 0^N$, there exists a distribution $\rho$ over
  the set $\calB^{(h)}$ and an $\alpha'>1$ such that \[x
  =\alpha'\cdot\E_{\beta\in_{\rho}{\calB^{(h)}}}[\beta].\]
\end{corollary}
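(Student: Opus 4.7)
The plan is to leverage Lemma~\ref{lemma:cycleDecomposition}, which already expresses every nonzero codeword $x$ as a conical combination of normalized characteristic vectors $\chi_G(\gamma)$ of simple cycles. It therefore suffices to show that the characteristic vector of any simple cycle $\gamma$ of length $\ell$ can be written as a sum of exactly $\ell$ deviations in $\calB^{(h)}$; substituting this expression into the cycle decomposition and renormalizing will produce the desired distribution over $\calB^{(h)}$ and scaling constant $\alpha'>1$.

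To decompose a simple cycle, fix $\gamma$ of length $\ell$ and enumerate its vertices cyclically as $u_0,u_1,\ldots,u_{\ell-1}$ (indices taken mod $\ell$). Since $\gamma$ is simple, traversing it one step at a time in a fixed direction never creates a $(u,v,u)$ sub-walk, even when wrapping around multiple times. For each starting index $i\in\{0,\ldots,\ell-1\}$, let $p_i$ denote the walk of length $h$ starting at $u_i$ and proceeding around $\gamma$ in this chosen direction. Each $p_i$ is a backtrackless path of length exactly $h$ in $G$, so $\beta_i\triangleq\chi_G(p_i)/(h+1)\in\calB^{(h)}$.

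The key identity is $\chi_G(\gamma)=\sum_{i=0}^{\ell-1}\beta_i$. For any variable node $v$ not on $\gamma$ both sides vanish. For $v=u_{t_0}$ lying on $\gamma$ with $t_0$ unique, the multiplicity of $v$ in $p_i$ equals $|\{s\in\{0,\ldots,h\}:u_{i+s}=v\}|$, so double counting yields
\[
\sum_{i=0}^{\ell-1}\bigl|\{s:u_{i+s}=v\}\bigr| \;=\; \sum_{s=0}^{h}\bigl|\{i\in\{0,\ldots,\ell-1\}:i+s\equiv t_0\pmod\ell\}\bigr| \;=\; h+1,
\]
since for each fixed $s$ exactly one residue $i$ works. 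Dividing by $(h+1)\deg_G(v)$ gives $\sum_i[\beta_i]_v=1/\deg_G(v)=[\chi_G(\gamma)]_v$, establishing the identity.

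Substituting into Lemma~\ref{lemma:cycleDecomposition} produces $x=\alpha\sum_\gamma\rho(\gamma)\sum_{i=0}^{\ell_\gamma-1}\beta_i^\gamma$. Set $\bar\ell\triangleq\sum_\gamma\ell_\gamma\rho(\gamma)$ and define a probability distribution $\rho'$ on $\calB^{(h)}$ by placing weight $\rho(\gamma)/\bar\ell$ on each $\beta_i^\gamma$ (aggregating multiplicities when different pairs $(\gamma,i)$ yield the same element of $\calB^{(h)}$). Then $x=\alpha'\cdot\E_{\beta\in_{\rho'}\calB^{(h)}}[\beta]$ with $\alpha'\triangleq\alpha\bar\ell$, and $\alpha'>1$ follows from $\alpha>1$ together with $\bar\ell\geqslant 4$ (every cycle in a bipartite graph has length at least $4$). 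The double-counting identity of the third paragraph is the only real point to watch; everything else amounts to routine bookkeeping.
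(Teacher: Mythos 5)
Your proposal follows the paper's own route exactly: invoke Lemma~\ref{lemma:cycleDecomposition}, split each simple cycle $\gamma$ of length $\ell$ into the $\ell$ wrap-around segments of length $h$, verify they are backtrackless deviations, and renormalize the resulting conical combination. The bookkeeping at the end (defining $\rho'$ with weights $\rho(\gamma)/\bar\ell$ and $\alpha'=\alpha\bar\ell>1$) is fine.

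However, your key identity $\chi_G(\gamma)=\sum_{i=0}^{\ell-1}\beta_i$ is not quite correct in one case, and the paper treats this case explicitly. Your double-counting argument counts, for each vertex $v$ on $\gamma$, all $h+1$ occurrences of $v$ among the position/start-index pairs. But the definition of the normalized characteristic vector stipulates that when a path is \emph{closed}, the multiplicity of its endpoints is counted only once. When $\ell$ divides $h$, every segment $p_i$ is a closed walk from $u_i$ back to $u_i$, so each vertex $v=u_{t_0}$ loses one unit of multiplicity in the single segment $p_{t_0}$ that starts and ends at it; the total multiplicity across $\cup_i p_i$ is then $h$, not $h+1$, and the correct identity is $\chi_G(\gamma)=\frac{h+1}{h}\sum_i\beta_i$. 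This does not endanger the corollary --- the extra factor $\frac{h+1}{h}\geq 1$ only increases the coefficients, which is exactly why the paper states Equation~(\ref{eq:chi}) with a cycle-dependent constant $\delta_\gamma\geq 1$ rather than $\delta_\gamma=1$ --- but as written your identity is false for $\ell\mid h$, so you need to split into the two cases (or at least observe that the deficit only helps).
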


\begin{proof}
  Following Lemma~\ref{lemma:cycleDecomposition}, it suffices to show
  that, for every simple cycle $\gamma$, the set $\{\psi_i\}$
  of paths in $\gamma$ of length $h$
satisfy
\begin{align}
\exists \delta_{\gamma}\geq 1:   \chi_G(\gamma) &=
  \delta_\gamma\cdot\sum_{i=0}^{|\gamma|-1}\frac{1}{h+1}\cdot\chi_G(\psi_i).
\label{eq:chi}
\end{align} Indeed, let $\beta_i(\gamma)\eqdf
\frac{1}{h+1}\cdot\chi_G(\psi_i)$, then
  \begin{align*}
    x &=\alpha \cdot \sum_{\gamma} \rho(\gamma)\cdot \chi_G(\gamma)\\
    &=\alpha \cdot \sum_{\gamma} \left(
      \rho(\gamma)\cdot \delta_\gamma\cdot\sum_i \beta_i(\gamma)\right)\\
    &=\alpha \cdot \sum_{\beta} \left( \rho(\gamma_\beta) \cdot \delta_\gamma \cdot \beta \right).
  \end{align*}
  Note that in the last line, we use the fact that the cycles in $G_x$
  are decomposed into edge disjoint cycles, and hence each deviation
  $\beta$ appears in exactly one cycle, denoted by $\gamma_\beta$. The
  corollary follows because the coefficients $\rho(\gamma_\beta)\cdot
  \delta_\gamma$ are nonnegative and their sum is at least one.

  We now prove Equation~(\ref{eq:chi}). Let
  $\gamma=(v_0,v_1,\ldots,v_{\ell-1},v_{\ell}=v_0)$ be a simple cycle
  in $G$ of length $\ell$. For every $0\leqslant i\leqslant\ell-1$,
  let $\psi_i = (v_i , v_{i+1\mod \ell}, ... , v_{i+h\mod\ell})$
  denote a segment of $\gamma$ that starts at node $v_i$ and contains
  $h$ edges. (Note that if $h\geqslant\girth(G)$, then a single
  segment may traverse a node in the cycle more than once.)  For every
  $0\leqslant j\leqslant h$, a node $v$ appears exactly once as the
  $j$th node in one of the paths $\{\psi_i\}_{i=0}^{\ell-1}$.

  If $h$ is not a multiple of $\ell$, then the multiplicity of every
  vertex $v\in\gamma$ in $\cup_{i=0}^{\ell-1}\psi_i$ equals $h+1$.
  Therefore,
\[\chi_G(\gamma) = \sum_{i=0}^{\ell-1}\frac{1}{h+1}\cdot\chi_G(\psi_i).\]
Otherwise, $\ell$ divides $h$, and every $\psi_i$ is a cycle whose
both endpoints $v_i$ are counted as one occurrence in
$\chi_G(\psi_i)$. Hence, the multiplicity of every vertex $v\in\gamma$
in $\cup_{i=0}^{\ell-1}\psi_i$ equals $h$, and
\[\chi_G(\gamma) = \frac{h+1}{h}\sum_{i=0}^{\ell-1}\frac{1}{h+1}\cdot\chi_G(\psi_i).\]
\end{proof}

\begin{theorem}[local-optimality is sufficient for ML]\label{thm:MLsufficient}
Let $\calC(G)$ denote an even Tanner code. Let $\lambda\in\R^N$ denote the LLR vector received from the channel and let $h\in\N_+$. If $x$ is $h$-locally optimal codeword w.r.t. $\lambda$, then $x$ is also the unique maximum-likelihood codeword w.r.t. $\lambda$.
\end{theorem}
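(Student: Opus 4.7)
The plan is to combine the symmetry of local-optimality (Corollary~\ref{corr:LOsymmetry}) with the codeword decomposition into deviations (Corollary~\ref{cor:path_decomposition}). Concretely, I want to show that for every codeword $x'\in\calC(G)$ with $x'\neq x$, $\langle\lambda,x'\rangle>\langle\lambda,x\rangle$. This then gives both ML-optimality of $x$ and uniqueness.

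First, since Tanner codes with linear local codes are linear, and the even-Tanner hypotheses (even variable degrees, even-weight local codes) are properties of the defining graph and local codes rather than of particular codewords, the bitwise XOR $z\triangleq x\oplus x'$ is again a nonzero codeword of the \emph{same} even Tanner code. Consequently Corollary~\ref{cor:path_decomposition} applies to $z$: there exist a distribution $\rho$ on $\calB^{(h)}$ and a scalar $\alpha'>1$ with $z=\alpha'\cdot\E_{\beta\in_\rho\calB^{(h)}}[\beta]$.

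Next I would invoke Proposition~\ref{proposition:isoLO} to translate $h$-local optimality into a positivity statement in the ``flipped'' LLR $(-1)^x\ast\lambda$: for every $\beta\in\calB^{(h)}$,
\[
\langle(-1)^x\ast\lambda,\beta\rangle \;=\; \langle\lambda,x\oplus\beta\rangle-\langle\lambda,x\rangle \;>\; 0,
\]
where the strict inequality is exactly Definition~\ref{def:localOptimality}. Taking expectation under $\rho$ and using linearity of the inner product together with the decomposition of $z$ yields $\langle(-1)^x\ast\lambda,z\rangle>0$. Finally, applying Proposition~\ref{proposition:isoLO} a second time with $\beta$ replaced by $z\in\{0,1\}^N\subseteq[0,1]^N$, and using the coordinate-wise identity $x\oplus(x\oplus x')=x'$, I get
\[
\langle(-1)^x\ast\lambda,z\rangle \;=\; \langle\lambda,x'\rangle-\langle\lambda,x\rangle,
\]
so $\langle\lambda,x'\rangle>\langle\lambda,x\rangle$, as required.

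I do not expect a serious obstacle here: the heavy combinatorial work is already packaged in Corollary~\ref{cor:path_decomposition}, and the symmetry identity reduces everything to a single linear-algebraic manipulation. The only point that deserves care is the verification that $x\oplus x'$ is itself a codeword of the same even Tanner code so that Corollary~\ref{cor:path_decomposition} is legitimately applicable to it; this uses linearity of each local code $\calCbar^j$ and the fact that the evenness hypotheses are structural.
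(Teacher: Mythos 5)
Your proposal is correct and follows essentially the same route as the paper: decompose $z=x\oplus x'$ into deviations via Corollary~\ref{cor:path_decomposition}, use local-optimality to get strict positivity for each deviation, and conclude by linearity. The paper phrases the linearity step through the affine map $\beta\mapsto\langle\lambda,x\oplus\beta\rangle$ and a convex combination $(1-\delta)x+\delta x'$, whereas you route it through $(-1)^x\ast\lambda$ via Proposition~\ref{proposition:isoLO}; these are the same computation.
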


\begin{proof}
The proof follows \cite[proof of Theorem 2]{ADS09} and \cite[proof of Theorem 6]{HE11}.
We use the decomposition proved in Corollary~\ref{cor:path_decomposition} to show that for every codeword $x' \neq x$, $\langle \lambda , x'\rangle > \langle \lambda , x \rangle$.
Let $z \triangleq x \oplus x'$. By linearity, $z\in\calC(G)$. Moreover, $z\neq0^N$ because $x\neq x'$.
By Corollary~\ref{cor:path_decomposition} there exists a distribution over the set $\mathcal{B}^{(h)}$, such that
$\E_{\beta \in \mathcal{B}^{(h)}} \beta = \delta\cdot z$, where $\delta\triangleq\frac{1}{\alpha'}<1$.
Let $f:[0,1]^N \rightarrow \R$ be the affine linear function defined by $f(\beta) \triangleq \langle \lambda , x \oplus \beta \rangle = \langle\lambda , x \rangle + \sum_{i=1}^{N}(-1)^{x_i}\lambda_i \beta_i$.
Then,
\begin{eqnarray*}
\langle \lambda , x \rangle &<& \E_{\beta \in \calB^{(h)}} \langle \lambda , x \oplus \beta \rangle \ \ \ (\text{by local-optimality of $x$}) \\
&=& \langle\lambda,x\oplus\E_{\beta \in \calB^{(h)}}\beta\rangle \ \ \ (\text{by linearity of $f$ and linearity of expectation}) \\
&=& \langle\lambda,x\oplus\delta z \rangle  \ \ \ \ \ \ \ \ \ \ \ \ \ \ (\text{by Lemma \ref{lemma:cycleDecomposition}})\\
&=& \langle \lambda , (1-\delta)x+\delta(x\oplus z) \rangle\\
&=& \langle \lambda , (1-\delta)x+\delta x' \rangle\\
&=& (1-\delta)\langle \lambda , x \rangle+\delta\langle \lambda , x' \rangle.
\end{eqnarray*}
which implies that $\langle \lambda , x' \rangle>\langle \lambda , x \rangle$ as desired.
\end{proof}

\begin{remark}
  Lemma~\ref{lemma:cycleDecomposition} allows one to define
  local-optimality with respect to deviations induced by simple
  cycles.  Local-optimality based on backtrackless paths of arbitrary
  length $h$ decouples the definition of local-optimality from the
  girth of the Tanner graph. The implication of this decoupling on
  iterative decoding is discussed in Section~\ref{sec:discussion}.
\end{remark}

\subsection{Local-Optimality Implies LP-Optimality} \label{subsec:LOimpliesLP}
In the following section we show that local-optimality is sufficient
for LP-optimality (Theorem~\ref{thm:LPsufficient}).  We consider graph
cover decoding introduced by Vontobel and Koetter~\cite{VK05} and its
extension to Tanner codes~\cite[Chapter 2.6]{Hal11}. The proof of
Theorem~\ref{thm:LPsufficient} is based on Lemma~\ref{lemma:cover Opt}
that states that local-optimality is preserved under lifting to any
$M$-cover graph.

The presentation in this section uses the terms and notation of
Vontobel and Koetter~\cite{VK05} (see also~\cite[Chapter 2.6]{Hal11}).  Let $\tilde{G}$ denote an $M$-cover
of $G$.  Let $\tilde{x}=x^{\uparrow M}\in \mathcal{C}(\tilde{G})$ and
$\tilde{\lambda}=\lambda^{\uparrow M} \in \R^{N\cdot M}$ denote the
$M$-lifts of $x$ and $\lambda$, respectively.

\begin{proposition}[local-optimality of the all-zero codeword is preserved by $M$-lifts] \label{prop:coverOptimality}
$0^N$ is $h$-locally optimal codeword w.r.t.
  $\lambda \in \R^N$ if and only if $0^{N\cdot M}$ is $h$-locally optimal codeword w.r.t. $\tilde{\lambda}$.
\end{proposition}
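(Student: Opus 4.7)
The plan is to establish a length-preserving bijective correspondence (between lifts and projections) of backtrackless paths in $G$ and backtrackless paths in $\tilde G$, and then observe that the inner product $\langle \lambda, \beta\rangle$ is invariant under this correspondence. Both directions of the equivalence then follow immediately from the definition of local-optimality as a positivity condition over all deviations.

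First I would set up the correspondence. Because $\tilde G$ is an $M$-cover of $G$, the projection $\pi:\tilde G \to G$ is a graph homomorphism that is a bijection on the neighborhoods of every node. Consequently, any backtrackless path $\tilde p = (\tilde v_0,\ldots,\tilde v_h)$ in $\tilde G$ projects to a backtrackless path $\pi(\tilde p)=(\pi(\tilde v_0),\ldots,\pi(\tilde v_h))$ of the same length in $G$ (the cover preserves edges and the local bijection on neighborhoods precludes introducing a backtrack). Conversely, given any backtrackless path $p=(v_0,\ldots,v_h)$ in $G$ and any choice of preimage $\tilde v_0 \in \pi^{-1}(v_0)$, there exists a unique lift $\tilde p$ starting at $\tilde v_0$, obtained by lifting edges one at a time using the neighborhood bijection; this lift is also backtrackless of length $h$.

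Next I would verify the key identity that connects the two deviation vectors. Because $\lambda^{\uparrow M}$ assigns to every copy $v^{(m)}$ the value $\lambda_v$, and because $\deg_{\tilde G}(v^{(m)})=\deg_G(v)$, the computation
\begin{align*}
\langle \tilde\lambda, \tfrac{1}{h+1}\chi_{\tilde G}(\tilde p)\rangle
&= \sum_{v\in\calV}\sum_{m=1}^{M}\lambda_v\cdot\frac{\abs{\{v^{(m)}\in\tilde p\}}}{(h+1)\deg_G(v)}\\
&= \sum_{v\in\calV}\lambda_v\cdot\frac{\abs{\{v\in \pi(\tilde p)\}}}{(h+1)\deg_G(v)}
= \langle \lambda, \tfrac{1}{h+1}\chi_G(\pi(\tilde p))\rangle
\end{align*}
shows that the cost of a deviation in $\tilde G$ equals the cost in $G$ of its projection.

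With the identity in hand, both directions are immediate. If $0^N$ is $h$-locally optimal w.r.t.\ $\lambda$, then every $\tilde\beta\in\calB^{(h)}(\tilde G)$ has the form $\tfrac{1}{h+1}\chi_{\tilde G}(\tilde p)$, and its cost equals the (positive) cost of the projected deviation $\tfrac{1}{h+1}\chi_G(\pi(\tilde p))\in\calB^{(h)}(G)$. Conversely, if $0^{NM}$ is $h$-locally optimal w.r.t.\ $\tilde\lambda$, then every $\beta\in\calB^{(h)}(G)$ can be lifted to some $\tilde\beta\in\calB^{(h)}(\tilde G)$ of equal cost, which is positive by assumption. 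I expect the only nontrivial step to be making the lifting/projection bookkeeping precise — in particular, verifying that backtracklessness is preserved by the local bijection on neighborhoods, and that multiplicities of variable nodes sum correctly across the $M$ sheets of the cover; once those are spelled out, the identity and both implications fall out with no further work.
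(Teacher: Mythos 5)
Your proof is correct and follows essentially the same route as the paper's: the paper also considers the surjection of length-$h$ paths in $\tilde G$ onto paths in $G$ induced by the covering map and concludes from the single identity $\langle \tilde\lambda, \tilde\beta\rangle = \langle \lambda, \beta\rangle$. You simply spell out the lifting/projection bookkeeping and the multiplicity computation that the paper leaves implicit.
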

\begin{proof}
Consider the surjection $\varphi$ of paths with length $h$ in $\tilde{G}$ to paths in $G$. This surjection is based on the covering map between $\tilde G$ and $G$.
Given a path $\tilde{\psi}$ in $\tilde{G}$, let $\psi\triangleq\varphi(\tilde{\psi})$. Let $\tilde\beta\triangleq\frac{1}{h+1}\chi_{\tilde{G}}(\tilde\psi)$ and $\beta\triangleq\frac{1}{h+1}\chi_G(\psi)$. The proposition follows because $\langle \lambda, \beta
  \rangle =\langle \tilde\lambda, \tilde\beta \rangle$.
\end{proof}

\noindent
The following lemma states that local-optimality is preserved by lifting to an $M$-cover.
\begin{lemma}\label{lemma:cover Opt}
$x$ is $h$-locally optimal w.r.t. $\lambda$ if and only if
$\tilde x$ is $h$-locally optimal w.r.t. $\tilde\lambda$.
\end{lemma}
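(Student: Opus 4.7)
The plan is to reduce the general statement to the all-zero case (Proposition~\ref{prop:coverOptimality}) by composing it with the symmetry of local-optimality (Corollary~\ref{corr:LOsymmetry}) applied on both sides of the lifting. Since both equivalences used are ``if and only if,'' chaining them preserves the biconditional we want.

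Concretely, I would argue as follows. By Corollary~\ref{corr:LOsymmetry}, $x$ is $h$-locally optimal with respect to $\lambda$ if and only if $0^N$ is $h$-locally optimal with respect to $(-1)^x\ast\lambda$. Then, applying Proposition~\ref{prop:coverOptimality} to the LLR vector $(-1)^x\ast\lambda$, this is equivalent to $0^{N\cdot M}$ being $h$-locally optimal with respect to $\bigl((-1)^x\ast\lambda\bigr)^{\uparrow M}$. Finally, applying Corollary~\ref{corr:LOsymmetry} again in the cover graph $\tilde G$, this is equivalent to $\tilde x$ being $h$-locally optimal with respect to $\tilde\lambda$.

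The one technical point that needs to be addressed is that the $M$-lift commutes with coordinate-wise sign flipping, i.e.,
\begin{equation*}
\bigl((-1)^x\ast\lambda\bigr)^{\uparrow M} \;=\; (-1)^{x^{\uparrow M}}\ast\lambda^{\uparrow M} \;=\; (-1)^{\tilde x}\ast\tilde\lambda.
\end{equation*}
This is where the second application of symmetry hooks onto the first. The identity is immediate from the definition of the $M$-lift as a coordinate-wise replication (each copy of a variable node in $\tilde G$ inherits both the LLR and the codeword entry of its image in $G$), so sign flipping per coordinate and lifting can be performed in either order.

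I expect no real obstacle beyond verifying this commutativity carefully; the rest is a purely formal chain of three ``iff'' steps. The reason the proof is so short is that all the combinatorial content about paths and deviations has already been absorbed into Proposition~\ref{prop:coverOptimality} via the surjection $\varphi$ between paths of length $h$ in $\tilde G$ and paths of length $h$ in $G$, which preserves the inner products used to define local-optimality.
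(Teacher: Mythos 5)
Your proof is correct and follows essentially the same route as the paper's: a chain of three biconditionals using Corollary~\ref{corr:LOsymmetry}, Proposition~\ref{prop:coverOptimality}, and Corollary~\ref{corr:LOsymmetry} again (the paper merely traverses the chain starting from $\tilde x$ rather than from $x$). Your explicit verification that $\bigl((-1)^x\ast\lambda\bigr)^{\uparrow M}=(-1)^{\tilde x}\ast\tilde\lambda$ is a detail the paper leaves implicit, and it is exactly the right point to check.
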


\begin{proof}
Assume that $\tilde{x}$ is $h$-locally optimal codeword w.r.t. $\tilde{\lambda}$.  By Corollary~\ref{corr:LOsymmetry}, $0^{N\cdot M}$ is $h$-locally optimal w.r.t. $(-1)^{\tilde{x}} \ast \tilde{\lambda}$.  By Proposition~\ref{prop:coverOptimality}, $0^N$ is $h$-locally optimal w.r.t. $(-1)^x\ast \lambda$. By Corollary~\ref{corr:LOsymmetry}, $x$ is $h$-locally optimal w.r.t. $\lambda$. Each of these implications is necessary and sufficient, and the lemma follows.
\end{proof}

The following theorem is obtained as a corollary of
Theorem~\ref{thm:MLsufficient} and Lemma~\ref{lemma:cover Opt}.
The proof is based on arguments utilizing properties of graph cover decoding. Those arguments are used for a reduction from ML-optimality to LP-optimality~\cite[Theorem 8]{HE11}.

\begin{theorem}[local-optimality is sufficient for LP optimality]\label{thm:LPsufficient}
If $x$ is a $h$-locally optimal codeword w.r.t. $\lambda$, then $x$ is also the unique optimal LP solution given $\lambda$.
\end{theorem}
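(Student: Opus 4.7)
The plan is to reduce LP-optimality on $G$ to ML-optimality on a suitable finite cover, using the equivalence between LP-decoding and graph-cover decoding due to Vontobel and Koetter~\cite{VK05}. The key fact from graph-cover decoding is that every point $z$ in the fundamental polytope $\calP(G,\calCbarJ)$ is, up to scaling by $1/M$, the projection of a codeword $\tilde z\in\calC(\tilde G)$ in some finite $M$-cover $\tilde G$ of $G$, and this projection preserves linear costs in the sense that $\langle\tilde\lambda,\tilde z\rangle = M\cdot\langle\lambda,z\rangle$ when $\tilde\lambda=\lambda^{\uparrow M}$.

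Assuming this equivalence, I would argue by contradiction. Suppose $x$ is $h$-locally optimal w.r.t.\ $\lambda$, but $x$ is not the unique LP-optimum. Then there exists $z\in\calP(G,\calCbarJ)$, $z\neq x$, with $\langle\lambda,z\rangle\leqslant\langle\lambda,x\rangle$. Choose an $M$-cover $\tilde G$ together with a codeword $\tilde z\in\calC(\tilde G)$ whose scaled projection equals $z$, so that
\begin{equation}
\langle\tilde\lambda,\tilde z\rangle \;=\; M\cdot\langle\lambda,z\rangle \;\leqslant\; M\cdot\langle\lambda,x\rangle \;=\; \langle\tilde\lambda,\tilde x\rangle,
\end{equation}
where $\tilde x=x^{\uparrow M}$. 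Note that $\tilde z\neq\tilde x$, since otherwise the projection would give $z=x$.

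Next, observe that the $M$-cover $\tilde G$ is itself the Tanner graph of an even Tanner code: the covering map preserves variable-node degrees and replicates each local-code node $C_j$ together with the same local code $\calCbar^j$, so both conditions of Definition~\ref{def:evenTannerCodes} are inherited. Thus Theorem~\ref{thm:MLsufficient} applies to $\calC(\tilde G)$. By Lemma~\ref{lemma:cover Opt}, the $h$-local optimality of $x$ w.r.t.\ $\lambda$ lifts to $h$-local optimality of $\tilde x$ w.r.t.\ $\tilde\lambda$. Theorem~\ref{thm:MLsufficient} then forces $\tilde x$ to be the \emph{unique} ML-codeword w.r.t.\ $\tilde\lambda$, yielding $\langle\tilde\lambda,\tilde z\rangle>\langle\tilde\lambda,\tilde x\rangle$, a contradiction.

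The main obstacle is the invocation of the graph-cover decoding equivalence, because it requires the scaled projection of codewords in covers to sweep out exactly (the rational points of) $\calP(G,\calCbarJ)$ in the generalized Tanner-code setting, not only for single parity-check local constraints. This extension is the content of~\cite[Chapter 2.6]{Hal11} and is precisely what is cited before the theorem; the rest of the argument is a clean chaining of Lemma~\ref{lemma:cover Opt} with Theorem~\ref{thm:MLsufficient} once the equivalence is in place.
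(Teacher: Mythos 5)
Your proposal is correct and follows essentially the same route as the paper: both invoke \cite[Proposition 10]{VK05} to realize an LP-optimal point as a codeword in a finite $M$-cover, lift local-optimality via Lemma~\ref{lemma:cover Opt}, and conclude with the uniqueness in Theorem~\ref{thm:MLsufficient} on the cover (the paper argues directly with a basic feasible optimal solution rather than by contradiction, but this is cosmetic). Your explicit remark that the $M$-cover is again an even Tanner code, so that Theorem~\ref{thm:MLsufficient} applies to $\calC(\tilde G)$, is a point the paper leaves implicit and is worth stating.
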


\begin{proof}
  Suppose that $x$ is $h$-locally optimal codeword w.r.t. $\lambda \in
  \R^N$. By~\cite[Proposition 10]{VK05}, for every basic
  feasible solution $z \in [0,1]^N$ of the LP, there exists an
  $M$-cover $\tilde{G}$ of $G$ and an assignment $\tilde{z}\in
  \{0,1\}^{N\cdot M}$ such that $\tilde{z} \in \mathcal{C}(\tilde{G})$
  and $z = \zeta(\tilde{z})$, where $\zeta(\tilde{z})$ is the image of
  the scaled projection of $\tilde{z}$ in $G$ (i.e., the
  pseudo-codeword associated with $\tilde{z}$). Moreover, since the
  number of basic feasible solutions is finite, we conclude that there
  exists a finite $M$-cover $\tilde{G}$ such that every basic feasible
  solution of the LP admits a valid assignment in $\tilde{G}$.

Let $z^*$ denote an optimal LP solution given $\lambda$. Without loss of generality $z^*$ is a basic feasible solution. Let $\tilde{z}^* \in \{0,1\}^{N\cdot M}$ denote the $0-1$ assignment in the $M$-cover $\tilde{G}$ that corresponds to $z^* \in [0,1]^N$. By~\cite[Proposition 10]{VK05} and the optimality of $z^*$ it follows that $\tilde{z}^*$ is a codeword in $\mathcal{C}(\tilde{G})$ that minimizes $\langle \tilde{\lambda} , \tilde{z} \rangle$ for $\tilde{z} \in \mathcal{C}(\tilde{G})$, namely $\tilde{z}^*$ is the ML codeword in $\calC(\tilde{G})$ w.r.t. $\lambda^{\uparrow M}$.

Let $\tilde{x}=x^{\uparrow M}$ denote the $M$-lift of an $h$-locally optimal codeword $x$. Note that because $x$ is a codeword, i.e., $x \in \{0,1\}^N$, there is a unique pre-image of $x$ in $\tilde{G}$, which is the $M$-lift of $x$. Lemma~\ref{lemma:cover Opt} implies that $\tilde{x}$ is $h$-locally optimal codeword w.r.t. $\tilde{\lambda}$. By Theorem~\ref{thm:MLsufficient}, we
also get that $\tilde{x}$ is the ML codeword in $\calC(\tilde{G})$ w.r.t. $\lambda^{\uparrow M}$. Moreover, Theorem~\ref{thm:MLsufficient} guarantees the uniqueness of an ML
optimal solution. Thus, $\tilde{x} = \tilde{z}^*$.
Because $\tilde{x} = \tilde{z}^*$, when projected to
$G$, we get that $x = z^*$ and
uniqueness follows, as required.
\end{proof}

\section{Probabilistic Analysis of Path-Based Local-Optimality} \label{sec:bounds}

In the previous section, we showed that LP-decoding succeeds if a locally-optimal codeword exists w.r.t. the received LLR. In this section we analyze the probability that a locally-optimal codeword exists for even Tanner codes in MBIOS channels.
The following equation justifies the all-zero codeword assumption for analyses based on local-optimality characterizations.
\begin{align}\label{eqn:LPfailure}
\Pr\{\mathrm{LP\ decoding\ fails}\} &\triangleq\Pr\{x\neq\hat{x}^{\LP}(\lambda)\mid c=x\} \nonumber\\
&\stackrel{(1)}{\leqslant}\Pr\big\{x\mathrm{\ is\ not\ } h\mathrm{-locally\ optimal\ w.r.t.\ }\lambda \big| c=x\big\}\nonumber\\
&\stackrel{(2)}{=}\Pr\big\{0^N\mathrm{\ is\ not\ } h\mathrm{-locally\ optimal\ w.r.t.\ }(-1)^x\ast\lambda \big| c=x\big\}\nonumber\\
&\stackrel{(3)}{=}\Pr\big\{0^N\mathrm{\ is\ not\ } h\mathrm{-locally\ optimal\ w.r.t.\ }\lambda \big| c=0^N\big\}\nonumber\\
&\stackrel{(4)}{=}\Pr\big\{\exists \beta \in \calB^{(h)} \mathrm{\ such\ that\ } \langle \lambda, \beta \rangle \leqslant 0 \big| c=0^N\big\}.
\end{align}
Inequality~(1) is the contrapositive statement of
Theorem~\ref{thm:LPsufficient}. Equality~(2) follows
Corollary~\ref{corr:LOsymmetry}. For MBIOS channels,
$\Pr(\lambda_i\mid c_i=0)=\Pr(-\lambda_i\mid c_i=1)$. Therefore, the
mapping $(x,\lambda)\mapsto(0^N,b\ast\lambda)$ where
$b_i\triangleq(-1)^{x_i}$ is a measure preserving mapping.
Equality~(3) follows by applying this mapping to
$(x,b\ast\lambda)\mapsto(0^N,b\ast b\ast\lambda)$. Equality~(4) follows by the
definition of path-based local-optimality.

Following Equation~(\ref{eqn:LPfailure}), our goal is to prove an upper bound on the probability that there exists a path of length $h$ in $G$ whose normalized characteristic vector has non-positive cost w.r.t. $\lambda$.

We use the following notation.
For a path $\psi$, the \emph{normalized cost of $\psi$ w.r.t. $\lambda$} is defined by $\cost(\psi)\triangleq\langle\lambda,\chi_G(\psi)\rangle$.
Let $d_L^{\min}\triangleq\min\{deg_G(v)\mid v\in\calV\}$, $d_L^{\max}\triangleq\max\{deg_G(v)\mid v\in\calV\}$, and $d_R^{\max}\triangleq\max\{deg_G(C)\mid C\in\calJ\}$. Let $D\triangleq d_L^{\max}\cdot d_R^{\max}$.

The intuition of the analysis is that long simple paths are unlikely to have non-positive cost.
We restrict the path length by $h<\girth(G)$ only for the probabilistic analysis. Tanner graphs with logarithmic girth can be constructed explicitly (see e.g.~\cite{Gal63}).
In particular, we assume that $\girth(G)>\log_D(N)$.

The following theorem presents an analytical
bound on the word error probability of the LP decoder over the BSC.
\begin{theorem} \label{thm:BSCbound}
Let $\calC(G)$ denote an even Tanner code of length $N$ such that $g\triangleq\log_D(N)<\girth(G)$.
Consider a BSC with crossover probability $p$. For any $\epsilon > 0$, if $p <
D^{-2\cdot(1+\frac{d_L^{\min}}{d_L^{\max}})\cdot(\epsilon + \frac{3}{2}+\frac{1}{2}\log_D(2))}$, then the LP decoder fails to decode the transmitted codeword with a probability of at most $N^{-\epsilon}$.
\end{theorem}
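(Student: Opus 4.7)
The plan is to upper bound the right-hand side of~(\ref{eqn:LPfailure}) by a union bound over backtrackless paths of length $h = g = \log_D(N)$. Since $h < \girth(G)$ by assumption, every backtrackless walk of length $h$ must actually be a \emph{simple} path: any repeated vertex would yield a closed backtrackless subwalk of length at most $h$, strictly shorter than $\girth(G)$, which is impossible in a bipartite graph. Consequently, the event in~(\ref{eqn:LPfailure}) reduces to the existence of a simple path $\psi$ of length $h$ with $\cost(\psi) \leq 0$, and a union bound gives $\Pr\{\text{LP fails}\} \leq (\text{number of simple paths of length } h) \cdot \max_\psi \Pr\{\cost(\psi) \leq 0\}$.

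For the combinatorial factor, I would use the fact that a simple path of length $h$ starts at one of at most $N + J \leq 2N$ vertices; after the first edge, backtracklessness leaves at most $d_L^{\max}-1$ or $d_R^{\max}-1$ choices at each step, so alternation in the bipartite Tanner graph yields at most $D^{h/2}$ continuations per starting vertex, hence at most $2N \cdot D^{h/2}$ simple paths overall.

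For the probabilistic factor, fix a simple path $\psi$ with $k \geq h/2$ variable nodes $v_1,\ldots,v_k$. Under BSC with the all-zero codeword assumption, the LLRs $\lambda_{v_i}$ are i.i.d., equal to $K \triangleq \ln\frac{1-p}{p}$ with probability $1-p$ and to $-K$ with probability $p$. Writing $\cost(\psi) = K \sum_i a_i Z_i$ with $a_i = 1/\deg_G(v_i)$ and $Z_i \in \{\pm 1\}$, a Chernoff bound gives $\Pr\{\cost(\psi) \leq 0\} \leq \prod_i \phi(s a_i)$, where $\phi(u) = (1-p)e^{-u} + p e^u$. Choose $s = u^{\ast} \cdot d_L^{\min}$ with $u^{\ast} \triangleq \tfrac{1}{2}\ln\frac{1-p}{p}$ (the minimizer of $\phi$), so that $s a_i \in [0, u^{\ast}]$. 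The convexity of $\ln \phi$ together with $\ln\phi(0) = 0$ and $\ln\phi(u^{\ast}) = \tfrac{1}{2}\ln(4p(1-p))$ then yields $\ln \phi(s a_i) \leq (s a_i / u^{\ast}) \ln \phi(u^{\ast})$; summing produces $\Pr\{\cost(\psi) \leq 0\} \leq (4p(1-p))^{R/2}$, where $R \triangleq d_L^{\min} \sum_i 1/\deg_G(v_i) \geq k \cdot d_L^{\min}/d_L^{\max}$.

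Combining the two factors, using $k \geq h/2$ and $(1-p) \leq 1$, gives $\Pr\{\text{LP fails}\} \leq 2N \cdot D^{h/2} \cdot (4p)^{h d_L^{\min}/(4 d_L^{\max})}$. Substituting $h = g = \log_D N$, taking $\log_D$, and demanding the result be $\leq N^{-\epsilon}$ reduces to a linear inequality in $g$ whose leading coefficient determines the sought condition on $p$. The main obstacle is tightness of the constants: a first pass with the convexity interpolation above yields an exponent on $p$ scaling like $4\,d_L^{\max}/d_L^{\min}$, whereas the theorem asserts the sharper $2(1 + d_L^{\min}/d_L^{\max})$. Closing this gap presumably requires splitting the per-path exponent into one contribution proportional to $k$ (the number of variable nodes) and another proportional to $R$ (the weighted sum), via either a refined choice of the Cram\'er parameter $s$ or a piecewise bound on $\ln \phi$ that exploits both $\ln\phi(0) = 0$ and $\ln\phi(u^{\ast})$.
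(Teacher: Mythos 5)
Your architecture is exactly the paper's: reduce to the all-zero codeword via Equation~(\ref{eqn:LPfailure}), observe that for $h=g<\girth(G)$ the relevant backtrackless paths are simple, union-bound over the at most $O(N\cdot D^{g/2})$ such paths, and multiply by a per-path probability $\Pr\{\cost(\psi)\leqslant 0\}$. The gap is in the per-path estimate, and you have correctly diagnosed that your Chernoff/convexity argument does not reach the stated constant. The missing idea is that no exponential-moment machinery is needed at all: since each variable node of $\psi$ contributes $\lambda_v/\deg_G(v)$ with $\lambda_v\in\{\pm1\}$, every node with $\lambda_v=+1$ contributes at least $1/d_L^{\max}$ and every node with $\lambda_v=-1$ contributes at least $-1/d_L^{\min}$; hence $\cost(\psi)\leqslant 0$ forces at least a $\delta\triangleq\frac{d_L^{\min}}{d_L^{\min}+d_L^{\max}}$ fraction of the $g/2$ variable nodes to be flipped, and a union bound over which nodes are flipped gives $\Pr\{\cost(\psi)\leqslant0\}\leqslant\binom{g/2}{g\delta/2}\,p^{g\delta/2}\leqslant 2^{g/2}p^{g\delta/2}$. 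Plugging this into your union bound yields the coefficient $2/\delta=2\big(1+\frac{d_L^{\max}}{d_L^{\min}}\big)$ in the exponent of $D$. Your route instead yields $4\,d_L^{\max}/d_L^{\min}$, which coincides with $2(1+d_L^{\max}/d_L^{\min})$ precisely in the left-regular case (where your bound $(4p(1-p))^{g/4}\leqslant 2^{g/2}p^{g/4}$ matches the paper's term for term) but is strictly weaker for irregular codes. So as written your argument proves a weaker threshold whenever $d_L^{\min}<d_L^{\max}$, and the fix is the combinatorial counting step above rather than a refined choice of the Cram\'er parameter.

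Two further remarks. First, be aware that the constant appearing in the theorem statement, $2\big(1+\frac{d_L^{\min}}{d_L^{\max}}\big)$, is \emph{not} what the paper's own proof delivers either: tracing the proof gives $2/\delta=2\big(1+\frac{d_L^{\max}}{d_L^{\min}}\big)$, and the two agree only in the left-regular case (both equal $4$, consistent with the example $p\leqslant D^{-8}/4$ following the theorem). The fraction in the statement appears to be inverted, so the target you were trying to close the gap to is not actually attained by the paper; aim for $2(1+d_L^{\max}/d_L^{\min})$. Second, your reduction from backtrackless to simple paths is slightly more explicit than the paper's (which simply restricts attention to simple paths once $h<\girth(G)$), but your justification should rest on the fact that a non-simple backtrackless walk contains a closed backtrackless subwalk and hence a cycle of length at most $h<\girth(G)$; bipartiteness is not the operative point there.
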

For example, for left-regular codes (i.e., $d_L^{\max}=d_L^{\min}$), set $\epsilon=\frac{1}{2}$. If $p\leqslant\frac{D^{-8}}{4}$, then the word error probability of LP-decoding is at most $\frac{1}{\sqrt{N}}$.
\begin{proof}
By Equation~(\ref{eqn:LPfailure}) we may assume that the all-zero codeword is transmitted, i.e., $c=0^N$. Hence $\lambda_v=1$ w.p. $(1-p)$ and $\lambda_v=-1$ w.p. $p$.
We bound the word error probability $P_w$ using a union bound over all the events where simple paths of length $g$ have non-positive normalized cost in $G$ w.r.t. $\lambda$.

Let $\psi$ be a particular path of length $g$; $\psi$ contains $\frac{g}{2}$ variable nodes. Each variable node in the path is assigned $+1$ with probability $(1-p)$ and $-1$ with probability $p$. Because of the degree normalization, at least $\frac{g}{2}\cdot\frac{d_L^{\min}}{d_L^{\min}+d_L^{\max}}$ of the variable nodes in $\psi$ must be assigned $-1$ to obtain
$\cost(\psi) \leqslant 0$. Let $\delta\triangleq\frac{d_L^{\min}}{d_L^{\min}+d_L^{\max}}$. Therefore,
\begin{equation} \Pr\{\cost(\psi)\leqslant0\}\leqslant\binom{\frac{g}{2}}{\frac{g}{2}\cdot\delta} p^{\frac{g}{2}\cdot\delta}\leqslant2^{\frac{g}{2}}\cdot p^{\frac{g}{2}\cdot\delta}.
\end{equation}
There are at most $|\calV|
\cdot D^{\frac{g}{2}}$ different simple paths of length $g$ in $G$. By the union bound,
\begin{equation}
\begin{aligned}
P_w & \leqslant \abs{\calV}\cdot D^{\frac{g}{2}}\cdot2^\frac{g}{2}\cdot p^{\frac{g}{2}\cdot\delta}\\
& \leqslant N\cdot D^{\frac{1}{2}\log_D(N)}\cdot2^{\frac{1}{2}\log_D(N)}\cdot p^{\frac{1}{2}\log_D(N)\cdot\delta}\\
&\leqslant N\cdot N^{\frac{1}{2}}\cdot N^{\frac{1}{2}\log_D(2)}\cdot N^{\frac{1}{2}\log_D(p)\cdot\delta}\\
&=N^{\frac{3}{2}+\frac{1}{2}\log_D(2)+\frac{1}{2}\log_D(p)\cdot\delta}\leqslant N^{-\epsilon}.
\end{aligned}
\end{equation}
\end{proof}

For the case of BI-AWGN channel, we derive a bound on the word error probability for left-regular even Tanner codes, i.e., $d_L=d_L^{\min}=d_L^{\max}$.
The extension to the case of irregular even Tanner codes requires exhaustive notation and computations.

\begin{theorem}\label{thm:AWGNbound}
Let $\calC(G)$ denote a left-regular even Tanner code of length $N$ such that $g=\log_D(N)$ where $D\triangleq d_L\cdot d_R^{\max}$.
Consider a BI-AWGN channel with variance $\sigma^2$. For any $\epsilon > 0$, if $\sigma^2 <
\frac{\log_D (e)}{6 + 4\epsilon}$, then the LP decoder fails to decode the transmitted codeword with a probability of at most $\frac{\sigma}{\sqrt{\pi\log_D(N)}}\cdot N^{-\epsilon}$.
\end{theorem}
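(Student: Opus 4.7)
} The plan is to mimic the proof of Theorem~\ref{thm:BSCbound}, replacing the Bernoulli bound for $\Pr\{\cost(\psi)\leq 0\}$ with a Gaussian tail bound. By Equation~(\ref{eqn:LPfailure}), we may condition on the transmission of the all-zero codeword, and then union-bound over all simple paths of length $g = \log_D(N)$ in $G$ the events $\{\cost(\psi)\leq 0\}$.

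First, I would write down the distribution of the LLRs under the all-zero codeword assumption on the BI-AWGN channel with variance $\sigma^2$. A direct computation gives $\lambda_v = 2y_v/\sigma^2$, so the $\lambda_v$'s are i.i.d. Gaussian with mean $2/\sigma^2$ and variance $4/\sigma^2$. Because the code is left-regular with common variable-degree $d_L$ and $g<\girth(G)$, any path $\psi$ of length $g$ is simple and visits exactly $g/2$ distinct variable nodes, each contributing $1/d_L$ to $\chi_G(\psi)$. Hence
\[
\cost(\psi) \;=\; \frac{1}{d_L}\sum_{v\in\psi}\lambda_v,
\]
and $\sum_{v\in\psi}\lambda_v$ is Gaussian with mean $\mu = g/\sigma^2$ and variance $\tau^2 = 2g/\sigma^2$. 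Therefore
\[
\Pr\{\cost(\psi)\leq 0\} \;=\; Q\!\left(\sqrt{g/(2\sigma^2)}\right),
\]
where $Q$ denotes the standard Gaussian tail function.

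Next, I would apply the standard bound $Q(x)\leq \frac{1}{x\sqrt{2\pi}}\,e^{-x^2/2}$ with $x = \sqrt{g/(2\sigma^2)}$, obtaining
\[
\Pr\{\cost(\psi)\leq 0\} \;\leq\; \frac{\sigma}{\sqrt{\pi g}}\,\exp\!\left(-\frac{g}{4\sigma^2}\right)
\;=\; \frac{\sigma}{\sqrt{\pi g}}\cdot N^{-\log_D(e)/(4\sigma^2)}.
\]
Using the same counting argument as in the BSC proof, the number of simple paths of length $g$ in $G$ is at most $\abs{\calV}\cdot D^{g/2} = N\cdot N^{1/2}$, so the union bound gives
\[
P_w \;\leq\; N^{3/2}\cdot \frac{\sigma}{\sqrt{\pi \log_D(N)}}\cdot N^{-\log_D(e)/(4\sigma^2)}.
\]
Finally I would impose the hypothesis $\sigma^2 < \log_D(e)/(6+4\epsilon)$, which is exactly equivalent to $\log_D(e)/(4\sigma^2) \geq 3/2 + \epsilon$, and conclude that $P_w \leq \frac{\sigma}{\sqrt{\pi\log_D(N)}}\cdot N^{-\epsilon}$.

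There is no real obstacle here: the only delicate points are (i)~correctly accounting for the degree normalization $1/d_L$ in $\chi_G(\psi)$, which cancels when we ask whether $\cost(\psi)\leq 0$, and (ii)~matching the constants so that the desired prefactor $\sigma/\sqrt{\pi\log_D(N)}$ comes out of the $Q$-function bound rather than from the union-bound counting. The left-regularity hypothesis is used precisely because it lets the degree factor out of the sum; the irregular case would force separate treatment of each path's degree profile (as remarked after the statement), which is why the theorem is restricted to the left-regular setting.
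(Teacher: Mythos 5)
Your proposal is correct and follows essentially the same route as the paper's proof: condition on the all-zero codeword via Equation~(\ref{eqn:LPfailure}), observe that $\cost(\psi)\leq 0$ is the lower tail of a Gaussian sum over the $g/2$ variable nodes of $\psi$, apply the standard tail bound $Q(x)\leq\frac{1}{x\sqrt{2\pi}}e^{-x^2/2}$ to get $\frac{\sigma}{\sqrt{\pi g}}e^{-g/(4\sigma^2)}$, and union-bound over the at most $N\cdot D^{g/2}=N^{3/2}$ simple paths. The only cosmetic difference is that you work with the true LLR $\lambda_v=2y_v/\sigma^2$ while the paper rescales to $\lambda_v=1+\phi_v$ with $\phi_v\sim\mathcal{N}(0,\sigma^2)$; since only the sign of $\cost(\psi)$ matters, both yield the identical probability $Q\bigl(\sqrt{g/(2\sigma^2)}\bigr)$ and the same final exponent condition $\log_D(e)/(4\sigma^2)\geq 3/2+\epsilon$.
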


\begin{proof}
By Equation~(\ref{eqn:LPfailure}) we assume that the all-zero codeword is transmitted, i.e., $c=0^N$. Hence $\lambda_v=1 + \phi_{i}$ where $\phi_{i} \sim \mathcal{N}(0,\sigma^{2})$ is a zero-mean Gaussian random variable with variance $\sigma^{2}$.
We bound the
word error probability $P_w$ using a union bound over all the events where simple paths of length $g$ have non-positive cost in $G$.

Let $\psi$ be a particular path of length $g$.
If the sum of the costs of the variable nodes in $\psi$ is non-positive, then $\cost(\psi)\leqslant0$.
Hence,
\begin{align}
  \begin{split}
    Pr \big\{ \cost(\psi) \leqslant 0 \big\} &= Pr \big\{ \sum_{i=1}^{\frac{g}{2}}{(1 + \phi_{i})} \leqslant 0 \big\} \\
    &= Pr \big\{ \sum_{i=1}^{\frac{g}{2}} {\phi_{i} \leqslant -\frac{g}{2}} \big\}.
  \end{split}
\end{align}
The sum of independent Gaussian random variables (RVs) with zero mean is a zero-mean Gaussian RV whose variance equals to the sum of the variances of the accumulated variables.  Let $\Phi =
\sum_{i=1}^{\frac{g}{2}}{\phi_{i}}$, then $\Phi \sim
\mathcal{N}(0,\sigma^{2}\cdot\frac{g}{2})$ is a zero-mean Gaussian RV with variance
$\sigma^{2}\cdot\frac{g}{2}$. Moreover, the Gaussian distribution function is symmetric around $0$.  Therefore,
\begin{align}
  \begin{split}
    Pr \{ \cost(\psi) \leqslant 0 \} &= Pr \big\{ \Phi \leqslant -\frac{g}{2} \big\} \\
    &= Pr \big\{ \Phi \geqslant \frac{g}{2} \big\}.
  \end{split}
\end{align}

For a Gaussian RV $\phi \sim \mathcal{N}(0,\sigma^{2})$ with zero
mean and variance $\sigma^{2}$, the inequality
\begin{equation} \label{eq:double}
Pr\{ \phi \geqslant x \} \leqslant \frac{\sigma}{x\sqrt{2\pi}}
e^{-\frac{x^{2}}{2\sigma^{2}}}
\end{equation}
holds for every $x>0$ \cite{Fel68}. We conclude that
\begin{equation}
  Pr \{ \cost(\psi) \leqslant 0 \} \leqslant
    \frac{\sigma}{\sqrt{\pi g}} e^{-\frac{g}{4\sigma^{2}}}.
\end{equation}

There are at most $\abs{\calV}
\cdot D^{\frac{g}{2}}$ different simple paths of length $g$ in $G$. By the union bound,
\begin{equation}
\begin{aligned}
P_w &\leqslant \abs{\calV}\cdot D^{\frac{g}{2}}\cdot\Pr \{ \cost(\psi) \leqslant 0\} \\
&\leqslant N\cdot D^{\frac{g}{2}} \frac{\sigma}{\sqrt{\pi g}} e^{-\frac{g}{4\sigma^{2}}}\\
&\leqslant N^{\frac{3}{2}}\cdot \frac{\sigma}{\sqrt{\pi \log_D(N)}} e^{-\frac{\log_D(N)}{4\sigma^{2}}}\\
&= \frac{\sigma}{\sqrt{\pi \log_D(N)}}N^{\frac{3}{2}-\frac{1}{4\sigma^{2}}\log_D(e)}\\
&\leqslant \frac{\sigma}{\sqrt{\pi \log_D(N)}}\cdot N^{-\epsilon}.
\end{aligned}
\end{equation}

\end{proof}

\section{Discussion}\label{sec:discussion}

\subsection{Finding Locally-Optimal Codewords with Message-Passing Algorithm}

A message-passing decoding algorithm, called normalized weighted
min-sum (\NWMS), was presented in~\cite{EH11} for Tanner codes with
single parity-check (SPC) local codes. The \NWMS\ decoder is
guaranteed to compute the ML-codeword in $h$ iterations provided that
a locally-optimal codeword with height $h$ exists. The
local-optimality characterization in~\cite{EH11} is stronger and is based
on subtrees of computation trees of the Tanner graph. We note that for
even Tanner codes with SPC local codes, the \NWMS\ decoder with
uniform weights (i.e., $w=1^h$) is guaranteed to compute the
ML-codeword in $h$ iterations provided that a path-based
$h$-locally optimal codeword exists. The number of iterations $h$ may
exceed the girth of the Tanner graph.

Consider an even Tanner code and an LLR vector $\lambda$.
Assume that there exists a codeword $x$ that is $h$-locally optimal
w.r.t. $\lambda$.  Given this assumption, maximum-likelihood decoding
is equivalent to solving the following problem: Find an
assignment $x\in\{0,1\}^N$ to variable nodes such that every path of
length $h$ in $G$ w.r.t. vertex weights
$(-1)^{x_v}\cdot\lambda_v/\deg_G(v)$ has positive weight.  By
Theorem~\ref{thm:LPsufficient}, LP-decoding computes such a valid
assignment $x$. The iterative \NWMS\ decoding algorithm also computes
such a valid assignment $x$ provided that the local codes are
restricted to SPC codes.

\subsection{Punctured Tanner Codes}

Local-optimality characterization and its analysis remain valid under
puncturing of a codeword. Puncturing of
a code is specified by a subset of variable nodes that are not transmitted; this subset is called the \emph{puncturing
  pattern}. A punctured code can be analyzed simply by
zeroing the LLR values of the punctured variable nodes.

How does puncturing affect the probability that a locally-optimal
codeword exists?  Consider a probabilistic analysis for
local-optimality that is based on a simple union bound over the set of
deviations of a Tanner code. We say that a variable node $v$
\emph{participates} in deviation $\beta$ if $\beta_v\neq0$. Suppose
that the puncturing pattern is chosen so that for every deviation
$\beta$, at most a constant fraction $\alpha$ of participating
variable nodes are punctured. Then, the same union bound analysis can
be applied directly to the punctured Tanner code. Such an analysis
implies bounds that are similar to the bounds obtained for the
(unpunctured) Tanner code; the only difference is that some parameters
need to be scaled by a constant factor that is proportional to $\alpha$.

\subsection{Analysis of Repeat-Accumulate Codes via Local-Optimality}

Feldman and Karger~\cite{FK02,FK04} introduced the concept of
linear-programming based decoding for repeat-accumulate RA($q$)
turbo-like codes. In an RA($q$) codes: (1)~an information word is
repeated $q$ times, (2)~the repeated information word is permuted by
an interleaver, and (3)~codeword bit $i$ equals to the parity of the
sum of the first $i$ bits (prefix) of the permuted repeated
information word. For RA($2$) codes over the BSC, they proved that the
word error probability of LP-decoding is bounded by an inverse
polynomial in the code length, given a certain constant threshold on
the noise. A similar claim was also proved for the BI-AWGN channel.
These bounds were further improved based on exact combinatorial
characterization of an error event and a refined algorithmic
analysis~\cite{HE05}.

Recently, Goldenberg and Burshtein~\cite{GB11} generalized the
analysis of Feldman and Karger~\cite{FK04} to RA($q$) codes with even
repetition $q\geqslant4$. For this family of codes, they proved
inverse polynomial bounds in the code length on the word error
probability of LP-decoding. These bounds are based on analyzing complicated graph structures, called
hyper-promenades in hypergraphs, that were defined in~\cite{FK04}.

In fact, the same bounds can be obtained by local-optimality with
deviations induced by short paths. The idea is to consider
(non-systematic) irregular repeat-accumulate codes with even
repetition factors as punctured Tanner codes (as illustrated in
Figure~\ref{fig:RA-TannerGraph}). Namely, set the LLR of each
systematic variable node to zero. Notice that, for every backtrackless
path in Tanner graphs of repeat accumulate codes, at most half of the
variable nodes are systematic. Therefore, at most half of the variable
nodes in each deviation of path-based local-optimality are punctured.
We can now analyze the word error probability by replacing
$\frac{g}{2}$ by $\frac{g}{4}$ in the proofs of Theorems~\ref{thm:BSCbound}
and~\ref{thm:AWGNbound}. Hence the unified technique of
local-optimality with deviations induced by short paths yields the
same results proved by Feldman and Karger~\cite{FK04} and Goldenberg
and Burshtein~\cite{GB11}.
Improving these bounds on successful decoding of RA($q$) codes (for
$q\geq 3$) remains an intriguing open question.

\begin{figure}
  \begin{center}
 \includegraphics[width=.9\textwidth]{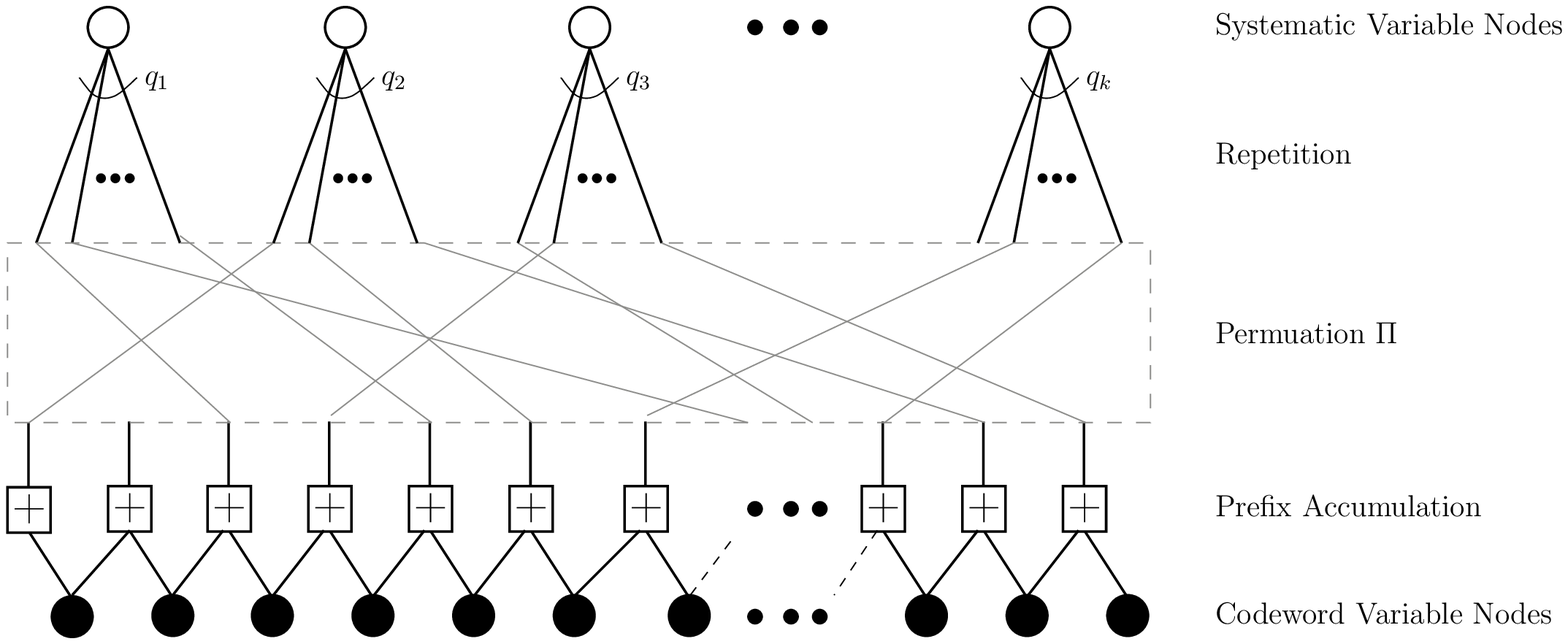}
 \caption{Tanner graph representation of an irregular repeat-accumulate code. \newline{\small The variable nodes in the top row of nodes (illustrated by white circles) correspond to the $k$ systematic bits. The repetition of every systematic bit corresponds to the degree $q_i$ of each systematic variable node (we assume that $q_i$ is even for every $i$). The interleaving process corresponds to the permutation of edges. The prefix accumulation corresponds to the alternating chain of single-parity check code nodes (illustrated by plus-squares) and codeword variable nodes (denoted by black circles) in the bottom.}}
  \label{fig:RA-TannerGraph}
  \end{center}
\end{figure}

\section{Conclusions}\label{sec:conclusions}
We present a simple application of the proof technique in~\cite{ADS09}
for bounds on word error probability with LP-decoding.  The set of
deviations used for defining local-optimality is induced by paths in
the Tanner graph. We apply the proof technique to a family of codes,
called even Tanner codes, that contains repeat accumulate codes with a
even repetition factors, LDPC codes with even left degrees, and
expander codes with even variable node degrees and even weighted local
codes. Inverse polynomial error bounds are proved for these codes for
the BSC and AWGN channel.

Stronger error bounds have been obtained for LDPC
codes~\cite{KV06,ADS09,HE11} and Tanner codes~\cite{EH11}
(without the restriction to even degrees and even weighted local
codes) by considering more complicated graphical structures and a more
sophisticated analysis. In these cases, inverse exponential error
bounds and improved bounds on noise thresholds were presented for
regular codes whose Tanner graphs have logarithmic girth. For example,
the local-optimality characterization for Tanner codes presented
in~\cite{EH11} is based on projections of weighted subtrees in
computation trees of the Tanner graph. The error bounds in this case
are based on an analysis of a sum-min-sum random process on trees.
While this local-optimality characterization applies to any regular
and irregular Tanner code, the probabilistic analysis and the
error bounds were restricted to regular Tanner codes.  The
simplicity of local-optimality based on paths enables us to obtain
(weak) bounds even for irregular codes.  Two interesting open
questions related to proving stronger bounds on the error probability
are (i)~extend the analysis of inverse exponential bounds to irregular
Tanner graphs, and (ii)~obtain bounds with respect to local-optimality
even ``beyond the girth''.

\bibliographystyle{alpha}

\begin{thebibliography}{CDE{\etalchar{+}}05}
\bibitem[ADS09]{ADS09} S.~Arora, C.~Daskalakis, and D.~Steurer,
``Message passing algorithms and
  improved {LP} decoding,'' in \emph{Proc. of the 41st annual
  ACM Symp. Theory of Computing (STOC'09)}, Bethesda, MD, USA, May 31 - June 02, 2009, pp. 3--12.

\bibitem[BMvT78]{BMT78} E.~Berlekamp, R.~McEliece, and H.~van
Tilborg, ``On the inherent intractability of certain coding
problems,'' \emph{{IEEE} Trans. Inf. Theory},  vol.~24, no.~3,
pp. 384–-386, May 1978.

\bibitem[DDKW08]{DDKW08}
C.~Daskalakis, A.~G.~Dimakis, R.~M.~Karp, and M.~J.~Wainwright, ``Probabilistic analysis
  of linear programming decoding,'' \emph{{IEEE} Trans. Inf. Theory}, vol.~54, no.~8, pp. 3565--3578, Aug. 2008.

\bibitem[EH11]{EH11}
G.~Even and N. Halabi, ``On decoding irregular Tanner codes with local-optimality guarantees,'' \emph{CoRR}, \url{http://arxiv.org/abs/1107.2677}, Jul. 2011.

\bibitem[Fel68]{Fel68} W. Feller, \emph{An introduction to
probability theory and its applications}, 3rd ed., New York:
Wiley, 1968, vol. I.

\bibitem[Fel03]{Fel03} J.~Feldman, ``Decoding error-correcting
codes via linear programming,'' Ph.D.
  dissertation, MIT, Cambridge, MA, 2003.

\bibitem[FK02]{FK02}
J.~Feldman and D.~R. Karger, ``Decoding
turbo-like codes via linear programming'' in \emph{Proc. 43rd
Annu. IEEE Symp. Foundations of Computer Science (FOCS)},
Nov.~2002.

\bibitem[FK04]{FK04}
J.~Feldman and D.~R. Karger, ``Decoding turbo-like codes via linear programming,'' \emph{J. Comput. Syst. Sci.}, vol.~68, no.~4, pp. 733--752, June 2004.

\bibitem[FMS{\etalchar{+}}07]{FMSSW07}
J.~Feldman, T.~Malkin, R.~A. Servedio, C.~Stein, and M.~J. Wainwright, ``{LP}
  decoding corrects a constant fraction of errors,'' \emph{{IEEE} Trans. Inf. Theory}, vol.~53, no.~1, pp. 82--89, Jan. 2007.

\bibitem[FS05]{FS05}
J.~Feldman and C.~Stein, ``{LP} decoding achieves capacity,'' in \emph{Proc. Symp. Discrete Algorithms (SODA'05)}, Vancouver, Canada, Jan. 2005, pp. 460--469.

\bibitem[FWK05]{FWK05}
J.~Feldman, M.~J. Wainwright, and D.~R.
Karger, ``Using linear programming to decode binary linear
codes,'' \emph{{IEEE} Trans. Inf. Theory}, vol.~51, no.~3, pp.
954–-972, Mar. 2005.

\bibitem[Gal63]{Gal63} R.~G. Gallager, \emph{{Low-Density
Parity-Check Codes}}.\hskip 1em plus 0.5em
  minus 0.4em\relax MIT Press, Cambridge, MA, 1963.


\bibitem[GB11]{GB11} I.~Goldenberg and D.~Burshtein, ``Error
bounds for repeat-accumulate codes decoded via linear
  programming,'' \emph{Adv. Math. Commn.}, vol.~5, no.~4, pp. 555–-570, Nov. 2011.

\bibitem[Hal11]{Hal11}
N.~Halabi, ``On decoding {T}anner codes with local-optimality guarantees,'' Ph.D. dissertation, submitted, Tel-Aviv University.

\bibitem[HE05]{HE05}
N.~Halabi and G.~Even, ``{Improved bounds on the word error probability of
  RA(2) codes with linear-programming-based decoding},'' \emph{{IEEE}
  Trans. Inf. Theory}, vol.~51, no.~1, pp. 265--280, Jan. 2005.

\bibitem[HE11]{HE11}
N.~Halabi and G.~Even, ``{LP decoding of regular {LDPC} codes in memoryless channels},'' \emph{{IEEE}
  Trans. Inf. Theory}, vol.~57, no.~2, pp. 887--897, Feb. 2011.

\bibitem[KV06]{KV06}
R.~Koetter and P.~O. Vontobel, ``On the block error probability of {LP} decoding of {LDPC} codes,'' in \emph{Proc. Inaugural Workshop of the Center
  for Information Theory and its Applications}, La Jolla, CA, USA, Feb. 2006.

\bibitem[Ska11]{Ska11}
V.~Skachek, ``Correcting a Fraction of Errors in Nonbinary Expander Codes With Linear Programming,'' \emph{{IEEE}
  Trans. Inf. Theory}, vol.~57, no.~6, pp. 3698--3706, Jun. 2011.

\bibitem[VK05]{VK05}
P.~O. Vontobel and R.~Koetter, "Graph-cover decoding and finite-length
analysis of message-passing iterative decoding of LDPC codes," \emph{CoRR},
\url{http://www.arxiv.org/abs/cs.IT/0512078}, Dec. 2005.

\bibitem[Von10]{Von10}
P.~Vontobel, ``A factor-graph-based random walk, and its relevance for {LP}
  decoding analysis and Bethe entropy characterization,'' in \emph{Proc. Information
  Theory and Applications Workshop}, UC San Diego, LA Jolla, CA, USA, Jan.~31-Feb.~5, 2010.

\bibitem[Wib96]{Wib96}
N.~Wiberg, ``Codes and decoding on general graphs'', Ph.D. dissertation, Link{\"o}ping
  University, Link{\"o}ping, Sweden, 1996.
\end{thebibliography}

\newcommand{\etalchar}[1]{$^{#1}$}

\end{document}